\newtheorem{theorem}{Theorem}
\newtheorem{remark}{Remark}
\newtheorem{assumption}{Assumption}
\newtheorem{lemma}{Lemma}
\newtheorem{definition}{Definition}
\newcommand{\ev}{{V}}
\newcommand{\sol}{{sol}}
\newcommand{\ic}{{k}} 
\newcommand{\ms}{{\textbf{q}}}
\newcommand{\ped}[1]{{_{\mathrm{#1}}}}
 \definecolor{darkgreen}{rgb}{0.0, 0.4, 0.1}
\title{\LARGE \bf
Sequential Randomized Algorithms for Convex Optimization in the Presence of Uncertainty
}
\author{Mohammadreza Chamanbaz, Fabrizio Dabbene, Roberto Tempo,\\
 Venkatakrishnan Venkataramanan, Qing-Guo Wang  
\thanks{M.\ Chamanbaz and V.\ Venkatakrishnan are with Data Storage Institute, Singapore (emails: Mrchamanbaz@gmail.com, Venka\_V@dsi.a-star.edu.sg).
        }%
\thanks{F.\ Dabbene and R.\ Tempo are with CNR-IEIIT Torino, Italy (emails: \{roberto.tempo, fabrizio.dabbene\}@polito.it).}%
\thanks{M.\ Chamanbaz and Q-G.\ Wang are with Department of Electrical and Computer Engineering, National University of Singapore (emails: \{Chamanbaz,Elewqg\}@nus.edu.sg). \newline
The results presented in this paper were obtained when the first author was visiting CNR-IEIIT for a period of six months under DSI funding.}}
\begin{document}

\maketitle

\begin{abstract}
In this paper, we propose new sequential randomized algorithms
for convex optimization problems in the presence of uncertainty.
A rigorous analysis of the theoretical properties of the solutions obtained by these algorithms, for full constraint satisfaction and partial constraint satisfaction, respectively, is given.
The proposed methods allow to enlarge the applicability of the existing randomized methods to real-world applications involving a large number of design variables.
Since the proposed approach does not provide a priori bounds on the sample complexity, extensive numerical simulations,
dealing with an application to hard-disk drive servo design, are provided. These simulations testify the goodness of the proposed solution.
\end{abstract}
\section{Introduction}\label{sec:introduction}
In recent years, research on randomized and probabilistic methods for control of uncertain systems has successfully evolved along various directions, see e.g. \cite{tempo_randomized_2012}
for an overview of the state of the art on this topic.
For convex control design, two main classes of algorithms, sequential and non-sequential,  have been proposed in the literature, and their theoretical properties have been rigorously studied, see e.g. \cite{calafiore_research_2011}.

Regarding non-sequential methods, the approach that has emerged is the so-called scenario approach, which has been introduced in
\cite{calafiore_uncertain_2004,calafiore_scenario_2006}.  Taking random samples of the uncertainty $q \in \mathbb{Q}$, the main idea of this particular line of research is to reformulate a semi-infinite convex optimization problem as a sampled
optimization problem subject to a finite number of random constraints.
Then, a key problem is to determine the sample complexity, i.e. the number of random constraints that should be generated, so that the so-called probability of violation is smaller than a given accuracy
$\varepsilon \in (0,1)$, and this event holds with a suitably large confidence $1-\delta \in (0,1)$.
On the other hand,
if accuracy and confidence are very small, and the number of design parameters is large, then the sample complexity may be large, and the sampled convex optimization problem may be difficult to solve in practice.

Motivated by this discussion, in this paper we develop a sequential method specifically tailored to the solution of the scenario-based optimization problem. The proposed approach iteratively solves \textit{reduced-size} scenario problems of increasing size, and it is particularly appealing for large-size problems.
This line of research follows and improves upon the schemes
previously developed for various control problems,  which include linear quadratic regulators, linear matrix inequalities and switched systems discussed in  \cite{calafiore_research_2011,tempo_randomized_2012}. The main idea of  these sequential methods is to introduce the concept of validation samples. That is, at step $k$ of the sequential algorithm, a ``temporary solution" is constructed and, using a suitably generated validation sample set, it is verified whether or not the probability of violation corresponding to the temporary solution is smaller than a given accuracy $\varepsilon$, and this event holds with confidence $1-\delta$.
Due to their sequential nature, these algorithms may have wider practical applications than non-sequential methods, in particular for real-world problems where fast computations are needed because of very  stringent time requirements due to on-line implementations.
However, we remark that the sequential methods proposed here, contrary to the scenario approach previously discussed,
do not provide a priori bounds on the sample complexity.

Compared to the sequential approaches discussed above, the methods proposed in this paper have the following distinct main advantages:
1. the termination of the algorithm does not require the knowledge of some user-determined parameters, such as the center of a feasibility ball; 2. the methods can be immediately implemented using existing
off-the-shelf convex optimization tools, and no ad-hoc implementation of specific update rules (such as stochastic gradient, ellipsoid or cutting plane) is needed.
We also remark that the methods presented here directly apply to optimization problems, whereas the sequential methods discussed in  \cite{calafiore_research_2011,tempo_randomized_2012} are limited to feasibility.

In this paper, which is an expanded version of \cite{chamanbaz_sequential_2013}, we study two new sequential algorithms for optimization,
with  full constraint satisfaction and partial constraint satisfaction, respectively, and we provide a rigorous analysis of their theoretical properties
regarding the probability of violation of the returned solutions. These algorithms fall into the class of sequential probabilistic validation (SPV) algorithms introduced in \cite{alamo_randomized_2015}.


In the second part of the paper, using  a non-trivial example regarding the position control of read/write head in a commercial hard disk drive, we provide  extensive numerical simulations to compare the
sample complexity of the scenario approach with the number of iterations required in the two sequential algorithms previously introduced. We remark that the sample complexity of the scenario approach is computed a priori, while for sequential algorithms, the numerical results  regarding the size of the validation sample set are random. For this reason, mean values, standard deviation and other related parameters are experimentally computed for both proposed algorithms by means of extensive Monte Carlo simulations. We also highlight that the worst case complexity of the proposed methods may be larger than that of the scenario approach.

\section{Problem Formulation and Preliminaries}\label{sec:formulation}
An uncertain convex problem has the form
\begin{align}\label{eq:original problem}
& \underset{\theta\in\Theta}{\text{min}}\quad  c^T\theta\\ \nonumber
& \text{subject to }   f(\theta,q)\leq0\text{ for all }q\in\mathbb{Q}
\label{eq:problem}
\end{align}
\noindent
where $\theta\in\Theta\subset\mathbb{R}^{n_\theta}$ is the vector of optimization variables and $q\in\mathbb{Q}$ denotes random uncertainty acting on the system, $f(\theta,q):\Theta\times\mathbb{Q}\to\mathbb{R}$ is convex in $\theta$ for any fixed value of $q\in\mathbb{Q}$ and $\Theta$ is a convex and closed set. We note that most uncertain convex problems can be reformulated as (\ref{eq:original problem}). In particular, multiple scalar-valued constraints $f_i(\theta,q)\leq0,\,\,i=1,\,\ldots,\,m$ can always be recast into the form (\ref{eq:original problem}) by defining $f(\theta,q)=\underset{i=1,\,\ldots,\,m}{\max}\,f_i(\theta,q)$.

In this paper, we study a probabilistic framework where the uncertainty vector $q$ is assumed to be a random variable and the constraint in (\ref{eq:original problem}) is allowed to be violated for some $q\in\mathbb{Q}$, provided that the rate of violation is sufficiently small. This concept is formally expressed  using the notion of ``probability of violation".
\begin{definition}[Probability of Violation]
The probability of violation of $\theta$ for the function $f:\Theta\times\mathbb{Q}\to\mathbb{R}$ is defined as
\begin{equation}\label{eq:prob of viol}
\ev(\theta)\doteq \Pr\left\{q\in\mathbb{Q}\,\,:\,f(\theta,q)>0\right\}.
\end{equation}
\end{definition}
The exact computation of $\ev(\theta)$ is in general very difficult since it requires the computation of multiple integrals associated to the probability in (\ref{eq:prob of viol}). However,  this probability can be estimated using randomization. To this end, assuming that a probability measure is given over the set $\mathbb{Q}$, we generate $N$ independent identically distributed (i.i.d.) samples within the set $\mathbb{Q}$
\[
\ms=\{q^{(1)},\ldots,q^{(N)}\}\in\mathbb{Q}^N,
\]
where $\mathbb{Q}^{N}\doteq \mathbb{Q}\times\mathbb{Q}\times\cdots\times\mathbb{Q}$ ($N$ times). Next, a Monte Carlo approach is employed to obtain the so called ``empirical violation" which is introduced in the following definition.
\begin{definition}[Empirical Violation]
For given $\theta\in\Theta$ the empirical violation of $f(\theta,q)$ with respect to the multisample $\ms=\{q^{(1)},\ldots,q^{(N)}\}$ is defined as
\begin{equation}\label{eq:emp mean}
\widehat{V}(\theta,\ms)\doteq\frac{1}{N}\sum _{i=1}^N \mathbb{I}_f(\theta,q^{(i)})
\end{equation}
where $\mathbb{I}_f(\theta,q^{(i)})$ is an indicator function defined as 
\[
\mathbb{I}_f(\theta,q^{(i)})\doteq\begin{cases}
0\quad \text{if}\,f(\theta,q^{(i)})\leq0\\
1\quad \text{otherwise}
\end{cases}.
\]
\end{definition}
\noindent
\subsection{The Scenario Approach}\label{sec:scenario}
In this subsection, we briefly recall the so-called scenario approach, also known as random convex programs, which was first introduced in \cite{calafiore_uncertain_2004,calafiore_scenario_2006}, see also \cite{campi_exact_2008} for additional results.
In this approach, a set of independent identically distributed random samples of cardinality $N$ is extracted from the uncertainty set and the following scenario problem is formed
\begin{align}\label{eq:scenario design}
& \underset{\theta\in\Theta}{\text{min}}\quad c^T\theta\\\nonumber
& \text{subject to }\, f(\theta,q^{(i)})\leq0,\,i=1,\ldots,N.
\end{align}
The function $f(\theta,q)$ is convex for fixed $q\in\mathbb{Q}$ and a further assumption is that the problem (\ref{eq:scenario design}) is feasible for any finite number of samples and attains a unique solution $\widehat{\theta}_N$. These assumptions are now formally stated.
\begin{assumption}[Convexity]\label{assump:convexity}
$\Theta\subset\mathbb{R}^{n_\theta}$ is a convex and closed set and $f(\theta,q)$ is convex in $\theta$ for any fixed value of $q\in\mathbb{Q}$.
\end{assumption}
\begin{assumption}[Feasibility and Uniqueness]\label{assump:unique}
The sampled optimization problem (\ref{eq:scenario design}) is feasible for any multisample extraction and its feasibility domain has a nonempty interior. Furthermore, the solution of (\ref{eq:scenario design}) exists and is unique.
\end{assumption}

We remark that the uniqueness assumption can be relaxed in most cases by introducing a tie-breaking rule (see Section 4.1 of \cite{calafiore_uncertain_2004}).
The probabilistic property of the optimal solution obtained from (\ref{eq:scenario design}) is stated in the next lemma taken from \cite{campi_exact_2008}.

\begin{lemma}\label{lemma:scenario binomial}
Let Assumptions \ref{assump:convexity} and \ref{assump:unique} hold and let $\delta,\,\varepsilon\in(0,1)$ and $N$ satisfy the following inequality
\begin{equation}\label{eq:binomial dis}
\sum_{i=0}^{n_\theta-1} { N \choose i}
\varepsilon^i(1-\varepsilon)^{N-i}\leq\delta.
\end{equation}
Then, with probability at least $1-\delta$,  the solution of the optimization problem (\ref{eq:scenario design})  $\widehat{\theta}_N$  satisfies the inequality $\ev(\widehat{\theta}_N)\leq\varepsilon$.
\end{lemma}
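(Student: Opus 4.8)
The plan is to prove the result through the classical notion of \emph{support constraints} combined with the exchangeability of the i.i.d.\ multisample. First I would call a sample $q^{(i)}$ a support constraint of the scenario problem (\ref{eq:scenario design}) if removing the associated inequality $f(\theta,q^{(i)})\le 0$ strictly lowers the optimal value (equivalently, moves the optimizer). The first key step is a Helly-type lemma: under Assumptions \ref{assump:convexity} and \ref{assump:unique}, whenever (\ref{eq:scenario design}) is feasible the number of support constraints is at most $n_\theta$. This is the geometric heart of the argument; it reflects that the optimizer of a convex program in $n_\theta$ variables is pinned down by at most $n_\theta$ active constraints, and I would establish it by contradiction, showing that if more than $n_\theta$ constraints were of support type one of them could be dropped without changing the optimizer.

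Next I would exploit symmetry. Because the samples are i.i.d., in the non-degenerate (fully supported) case every subset $I\subseteq\{1,\dots,N\}$ of cardinality $n_\theta$ is equally likely to be the support set, and these $\binom{N}{n_\theta}$ events partition the probability space up to a null set, with Assumption \ref{assump:unique} making the support set well defined. I would then decompose the target event $\{\ev(\widehat\theta_N)>\varepsilon\}$ according to the random support set $I^\star$. For a fixed set $I$ of size $n_\theta$, let $\theta_I$ denote the solution obtained using only the constraints indexed by $I$; conditionally on the $n_\theta$ samples in $I$, the optimizer $\theta_I$---and hence its violation $\ev(\theta_I)$---is determined, while the remaining $N-n_\theta$ samples are independent and each satisfies its constraint with probability $1-\ev(\theta_I)$. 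This yields the crucial identity that the probability that $I$ is the support set and $\ev(\theta_I)>\varepsilon$ equals $\mathbb{E}\big[\mathbf{1}\{\ev(\theta_I)>\varepsilon\}\,(1-\ev(\theta_I))^{N-n_\theta}\big]$.

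Summing this over all $\binom{N}{n_\theta}$ subsets, and fixing the distribution of $\ev(\theta_I)$ through the normalization obtained by dropping the indicator (the requirement $\binom{N}{n_\theta}\,\mathbb{E}[(1-\ev(\theta_I))^{N-n_\theta}]=1$ for every $N\ge n_\theta$ forces the moments of $1-\ev(\theta_I)$ to coincide with those of a Beta law), I would reduce the tail probability to a single Beta integral. An integration by parts then collapses this integral exactly to $\sum_{i=0}^{n_\theta-1}\binom{N}{i}\varepsilon^i(1-\varepsilon)^{N-i}$, the sharp scenario bound in the fully supported case, from which (\ref{eq:binomial dis}) gives $\Pr\{\ev(\widehat\theta_N)>\varepsilon\}\le\delta$.

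I expect the main obstacle to be the passage from this idealized fully supported case---exactly $n_\theta$ support constraints almost surely, producing the clean equality with upper index $n_\theta-1$---to the general degenerate and possibly infeasible setting of the lemma. Handling fewer than $n_\theta$ support constraints, breaking ties under Assumption \ref{assump:unique}, and absorbing the event that (\ref{eq:scenario design}) is infeasible is exactly what turns the sharp identity into the stated inequality and explains why the summation index runs up to $n_\theta$ rather than $n_\theta-1$. I would manage this by a domination/limiting argument showing that the degenerate and infeasible contributions are bounded by the fully supported ones, so that the binomial tail remains a valid upper bound in all cases.
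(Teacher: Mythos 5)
The first thing to note is that the paper contains no proof of this lemma: it is imported as a known result from \cite{calafiore_random_2010}, with the remark that the feasible, non-degenerate case (upper summation index $n_\theta-1$) was first established in \cite{campi_exact_2008}. Your sketch is therefore not an alternative to anything in the paper; it is an outline of the argument in those references. As an outline of that argument it is essentially right: at most $n_\theta$ support constraints by a Helly-type argument under Assumptions \ref{assump:convexity} and \ref{assump:unique}; exchangeability making all cardinality-$n_\theta$ candidate support sets equiprobable in the fully supported case; the conditional-independence identity $\Pr\{I^\star=I,\ \ev(\theta_I)>\varepsilon\}=\mathbb{E}\left[\mathbf{1}\{\ev(\theta_I)>\varepsilon\}\,(1-\ev(\theta_I))^{N-n_\theta}\right]$; the moment argument identifying the law of $\ev(\widehat{\theta}_N)$ with a Beta distribution; and the integration by parts yielding the binomial tail with upper index $n_\theta-1$.

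The genuine gap is exactly where you predict it, but it cannot be closed by the "domination/limiting argument" you gesture at, because that step is not a remainder estimate --- it is the entire content of the version actually stated in the lemma. Your clean identity and the partition $\{I^\star=I\}$ over cardinality-$n_\theta$ subsets hold only when the problem is almost surely feasible and fully supported; what must be proved is the bound when the support set can have fewer than $n_\theta$ elements (degeneracy, where several subsets $I$ reproduce $\widehat{\theta}_N$ and the Beta moment identity no longer pins down the distribution) and when (\ref{eq:scenario design}) can be infeasible with positive probability (where the event to be bounded is ``feasible and $\ev(\widehat{\theta}_N)>\varepsilon$'' and the conditioning device must be rebuilt). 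Handling these cases is what \cite{calafiore_random_2010} does with a dedicated construction, and it is precisely what costs the extra term, moving the upper index from $n_\theta-1$ to $n_\theta$ --- the distinction the paper itself highlights immediately before the lemma. As written, your argument establishes the Campi--Garatti bound under additional feasibility and non-degeneracy hypotheses, not the lemma as stated; to claim the stated result you would need to supply the regularization (or ball-covering) argument explicitly rather than assert that the degenerate and infeasible contributions are ``bounded by the fully supported ones,'' which is not a priori true and is the point of the cited proof.
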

We remark that Assumption 2, which guarantees that the sample problem is feasible,
is rather common in the literature on random convex programs, and can be relaxed using the approach introduced in \cite{calafiore_random_2010}.
In particular, in this case $n_{\theta}-1$ in (\ref{eq:binomial dis}) should be replaced by $n_{\theta}$.

\subsection{Scenario with Discarded Constraints}\label{sec:scenario with discarded}
The idea of scenario with discarded constraints \cite{calafiore_random_2010,CamGar:11} is to generate $N$ i.i.d. samples and then purposely discard $r<N-n_\theta$ of them. In other words, we solve an optimization problem of the form
\begin{align}\label{eq:random convex discarded}
& \underset{\theta\in\Theta}{\text{min}}\quad c^T\theta\\\nonumber
& \text{subject to }\, f(\theta,q^{(i)})\leq0,\,i=1,\ldots,N-r,
\end{align}
where, for notation ease, we assumed that the discarded constraints correspond to the last $r$ ones\footnote{In the more general case, the constraint in \eqref{eq:random convex discarded} should be written as follows
$f(\theta,q^{(i_v)})\leq0,\,v=1,\ldots,N-r,$
where $i_v\in\{1,\ldots,N\}$ represent the not discarded constraints. Note that this assumption is made without loss of generality, since the two sets of constraints are equivalent up to a  reordering.}.

The $r$ discarded samples are chosen so that the largest improvement in the optimal objective value is achieved. We remark that the optimal strategy to select $r$ discarded samples is a mixed-integer optimization problem, which may be hard to solve numerically. The following lemma \cite{CamGar:11} defines the probabilistic properties of the optimal solution obtained from (\ref{eq:random convex discarded}).
\begin{lemma}\label{lemma:discarded scenario}
Let Assumptions \ref{assump:convexity} and \ref{assump:unique} hold and let $\delta,\,\varepsilon\in(0,1)$, $N$  and $r<N-n_\theta$ satisfy the following inequality
\begin{equation}\label{eq:binomial discarded constraints}
{r+n_\theta-1 \choose r}\sum_{i=0}^{r+n_\theta-1} { N \choose i}
\varepsilon^i(1-\varepsilon)^{N-i}\leq\delta.
\end{equation}
Then, with probability at least $1-\delta$, the optimal solution of the optimization problem (\ref{eq:random convex discarded})  $\widehat{\theta}_N$ satisfies the inequality $\ev(\widehat{\theta}_N)\leq\varepsilon$.
\end{lemma}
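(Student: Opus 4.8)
The plan is to build directly on the basic scenario result of Lemma~\ref{lemma:scenario binomial} by reducing the $r$-discarded problem (\ref{eq:random convex discarded}) to a statement about \emph{support constraints}, and then paying a combinatorial price for the freedom in choosing which samples are discarded. The first and most important step is a structural characterization of the optimal solution $\widehat{\theta}_N$. After the $r$ constraints have been optimally discarded, $\widehat{\theta}_N$ is by construction the unique optimizer of the scenario program (\ref{eq:scenario design}) restricted to the $N-r$ retained samples; by Assumption~\ref{assump:convexity} and the Helly-type theory of convex scenario programs this retained problem has at most $n_\theta$ support constraints. Moreover, under the optimal discarding rule each of the $r$ discarded samples must be \emph{violated} by $\widehat{\theta}_N$: if a discarded sample were satisfied, then $\widehat{\theta}_N$ would remain feasible with that constraint reinstated, so a strictly larger improvement could have been obtained by instead discarding a genuinely binding constraint, contradicting optimality of the discarding choice. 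Consequently $\widehat{\theta}_N$ is completely determined by a set of at most $r+n_\theta$ \emph{essential} samples, namely the $\le n_\theta$ support constraints of the retained problem together with the $r$ violated discarded ones.

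With this characterization in hand, I would set up a union bound over essential sets. Write the bad event as $B=\{\ev(\widehat{\theta}_N)>\varepsilon\}$ intersected with feasibility of (\ref{eq:random convex discarded}); infeasibility is an admissible outcome and needs no control. Partition $B$ according to which indices constitute the essential set: for an essential set of size $i\in\{0,\ldots,r+n_\theta\}$ there are $\binom{N}{i}$ choices of positions among the $N$ samples, and within each such set there are at most $\binom{r+n_\theta}{r}$ ways to designate which essential constraints play the role of the $r$ discarded (violated) ones. This is precisely where the prefactor $\binom{r+n_\theta}{r}$ and the truncation of the binomial sum at $r+n_\theta$ in (\ref{eq:binomial discarded constraints}) originate.

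The remaining step is the conditional probability estimate for a fixed essential configuration. Fixing the essential indices, the $N-i$ non-essential samples are all satisfied by $\widehat{\theta}_N$ and do not alter it; by the i.i.d. assumption and an exchangeability argument, conditioning on a particular set being essential leaves the non-essential samples distributed as fresh independent draws forced to land in the region $\{q:f(\widehat{\theta}_N,q)\le0\}$. On the event $\ev(\widehat{\theta}_N)>\varepsilon$ each such draw lands there with probability strictly less than $1-\varepsilon$, which supplies the factor $(1-\varepsilon)^{N-i}$, while the essential constraints contribute the factor $\varepsilon^{i}$, exactly as in the derivation of (\ref{eq:binomial dis}) in Lemma~\ref{lemma:scenario binomial}. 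Summing the conditional bounds over all configurations yields
\begin{equation*}
\Pr\{B\}\le\binom{r+n_\theta}{r}\sum_{i=0}^{r+n_\theta}\binom{N}{i}\varepsilon^i(1-\varepsilon)^{N-i},
\end{equation*}
and hypothesis (\ref{eq:binomial discarded constraints}) forces this to be at most $\delta$, which is the claim.

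I expect the main obstacle to be the conditioning in the last step: because the essential set is itself a random function of the multisample $\ms$, one cannot naively treat the non-essential samples as independent of $\widehat{\theta}_N$. The rigorous resolution is the standard sampling-and-discarding exchangeability device of \cite{calafiore_random_2010,CamGar:11}: sum over all fixed candidate positions of the essential set, and for each fixed position exploit that the i.i.d. samples are conditionally fresh given that the designated set is the support. Matching the combinatorial constants $\binom{r+n_\theta}{r}$ and the weights $\varepsilon^i(1-\varepsilon)^{N-i}$ without double counting is the delicate bookkeeping that must be carried out with care, but it follows the template already established for the non-discarded case in Lemma~\ref{lemma:scenario binomial}.
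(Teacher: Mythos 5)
First, a point of comparison: the paper does not actually prove Lemma \ref{lemma:discarded scenario} --- it is imported verbatim from \cite{calafiore_random_2010} (see also \cite{CamGar:11}) --- so there is no internal proof to measure your argument against; what follows compares your sketch with the proofs in those references. Your overall architecture is the right one and is essentially theirs: reduce the discarded-constraint solution to a configuration of at most $r+n_\theta$ ``essential'' samples (the at most $n_\theta$ support constraints of the retained program together with the $r$ removed samples), observe that the removed samples are violated by $\widehat{\theta}_N$, and then run the exchangeability and counting machinery of the plain scenario theorem over these configurations, which is indeed where the prefactor $\binom{r+n_\theta}{r}$ and the truncation of the sum at $r+n_\theta$ come from.

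Two steps, however, are asserted rather than established, and both are genuinely delicate. (i) The claim that optimal discarding necessarily violates all $r$ removed constraints does not follow from your exchange argument alone: if fewer than $r$ removals are ``useful'' (for instance, the retained optimum already satisfies every remaining constraint strictly), an optimal removal set can contain satisfied constraints and no strictly better alternative exists, so there is no contradiction to exploit. In \cite{CamGar:11} this violation property is imposed as an assumption on the removal procedure (satisfied generically by the optimal rule), and a complete proof must either assume it or handle the degenerate case separately. (ii) The weights $\varepsilon^i(1-\varepsilon)^{N-i}$ do not arise from ``each essential constraint contributing a factor $\varepsilon$.'' The quantity $\sum_{i=0}^{k}\binom{N}{i}\varepsilon^i(1-\varepsilon)^{N-i}$ is a binomial tail, equal to the probability that a Beta-distributed violation exceeds $\varepsilon$, and in \cite{campi_exact_2008,calafiore_random_2010} it is obtained by integrating the conditional density of $\ev(\cdot)$ given the support-set cardinality. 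A per-sample product bound for a fixed essential configuration yields only the factor $(1-\varepsilon)^{N-i}$; extracting $\varepsilon^i$ from the $i$ essential draws in the way you describe would not survive a rigorous write-up. So the skeleton is correct and matches the cited proofs, but the ``delicate bookkeeping'' you defer is precisely where the real content of the theorem lives.
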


Note that in the literature there are different results regarding explicit sample complexity bounds  $N$ such that (\ref{eq:binomial dis})
or (\ref{eq:binomial discarded constraints}) are satisfied for given values of $\varepsilon,\delta\in(0,1)$, see e.g.\ \cite{alamo_randomized_2015,alamo_sample_2010,  calafiore_random_2010}.
These  bounds
depend linearly on $1/\varepsilon$ and $n_{\theta}$ and logarithmically on $1/\delta$. However, in practice, the required number of samples can be very large even for problems with moderate number of decision variables. Therefore, the computational load  of the random convex problems (\ref{eq:scenario design}) and (\ref{eq:random convex discarded}) might be beyond the capability of the available convex optimization solvers. Motivated by this observation, in the next section we propose two  sequential randomized algorithms for optimization.

\section{Sequential Randomized Algorithms}\label{sec:sequential algorithm}
The main philosophy behind the proposed sequential randomized algorithms lies on the fact that it is easy from the  computational point of view to evaluate a given ``candidate solution" for a large number of random samples extracted from $\mathbb{Q}$. On the other hand, it is clearly more expensive to solve the optimization problems (\ref{eq:scenario design}) or (\ref{eq:random convex discarded}) when the sample bound $N$ is large. The sequential randomized algorithms, which are presented next generate a sequence of ``design" sample sets $\{q_d^{(1)},\ldots,q_d^{(N_\ic)}\}$ with increasing cardinality $N_\ic$ which are used in (\ref{eq:scenario design}) and (\ref{eq:random convex discarded}) for solving the optimization problem. In parallel, ``validation" sample sets  $\{q_v^{(1)},\ldots,q_v^{(M_\ic)}\}$ of cardinality $M_\ic$ are also generated by both algorithms in order to check whether the given candidate solution, obtained from solving (\ref{eq:scenario design}) or~(\ref{eq:random convex discarded}), satisfies the desired violation probability.

The first algorithm is in line with those presented in \cite{calafiore_probabilistic_2007} and \cite{yasuaki_polynomial-time_2007},  in the sense that it uses a similar strategy to validate the candidate solution. However, while these algorithms have been designed for feasibility problems, the proposed algorithms deal with optimization problems.

\subsection{Full Constraint Satisfaction}\label{sec:full constrant}
The first sequential randomized algorithm is presented in Algorithm~\ref{alg:sequential algorithm 1}, and its theoretical properties are  stated in the  following theorem.
\begin{algorithm}�
\caption{\sc{Sequential Randomized Algorithm: Full Constraint Satisfaction}}
\label{alg:sequential algorithm 1}
\begin{enumerate}
  \item \textsc{Initialization} \newline
  Set  iteration counter to zero $(\ic=0)$. Choose  probabilistic levels $\varepsilon$, $\delta$ and number of iterations $\ic_t>1$.

  \item \textsc{Update}\label{item:update 1}\newline
  Set $\ic=\ic+1$ and $N_\ic\ge N\frac{\ic}{\ic_t}$ where $N$ is the smallest integer satisfying
  \begin{equation}\label{eq:binomial dis2}
\sum_{i=0}^{n_\theta-1} { N \choose i}
\varepsilon^i(1-\varepsilon)^{N-i}\leq\delta/2.
\end{equation}
  \item \textsc{Design}
  \begin{itemize}
    \item Draw $N_\ic$ i.i.d. samples $\ms_d=\{q_d^{(1)},\ldots, q_d^{(N_\ic)}\}\in\mathbb{Q}$ based on the underlying distribution.
    \item Solve the following \textit{reduced-size scenario problem}
        \begin{align}\label{eq:scenario des in alg1}
        \widehat{\theta}_{N_\ic}= & \arg \underset{\theta\in\Theta}{\text{min}}\quad  c^T\theta\\ \nonumber
        & \text{subject to}\quad  f(\theta,q_d^{(i)})\leq0,\quad i=1,\ldots,N_\ic.
        \end{align}
    \item \textbf{If} the last iteration is reached $(\ic=\ic_t)$, set $\theta\ped{\sol}=\widehat{\theta}_{N_\ic}$ and \textbf{Exit}.
    \item \textbf{Else}, continue to the next step.
  \end{itemize}
  \item \textsc{Validation}
  \begin{itemize}
    \item Draw
    \begin{equation}\label{eq:sample bound Mk}
    M_\ic\ge\frac{\alpha\ln k+\ln \left(\mathcal{S}_{k_t-1}(\alpha)\right)+\ln\frac{2}{\delta}}{\ln\left(\frac{1}{1-\varepsilon}\right)}
    \end{equation}
    i.i.d. samples $\ms_v=\{q_v^{(1)},\ldots, q_v^{(M_\ic)}\}\in\mathbb{Q}$ based on the underlying distribution, and $\mathcal{S}_{\ic_t-1}(\alpha)=\sum_{j=1}^{\ic_t-1}j^{-\alpha}$, where $\alpha>0$ is a tuning parameter.
    \item \textbf{If} $\mathbb{I}_f(\widehat{\theta}_{N_\ic},q_v^{(i)})=0 \text{ for }i=1,\ldots,M_\ic$;
        set $\theta\ped{\sol}=\widehat{\theta}_{N_\ic}$ and \textbf{Exit}.
        \item \textbf{Else}, goto step (\ref{item:update}).
  \end{itemize}
\end{enumerate}
\end{algorithm}
\begin{theorem}\label{theo:property of algorithm 1}
Let Assumptions \ref{assump:convexity} and \ref{assump:unique} hold. Then with probability at least $1-\delta$ the solution obtained from Algorithm \ref{alg:sequential algorithm 1} satisfies the inequality $\ev(\theta\ped{\sol})\leq\varepsilon$.
\end{theorem}
\begin{proof}
See Appendix \ref{app:prof of thm1}.
\end{proof}

We note that in steps 3 and 4, to preserve the i.i.d. assumptions, the design and validation samples need to be redrawn at each iteration, and sample-reuse techniques are not applicable.
\begin{remark}
It is important to observe that the probability
$\ev(\theta\ped{\sol})\leq\varepsilon$ in the statement of Theorem \ref{theo:property of algorithm 1} is the outcome probability of the algorithm. Hence, this probability is a measure on the whole collection of $\sum_k(N_k+M_k)$ samples that includes both design samples and validation samples. This measure is indeed different than the $N$-fold probability measure of the uncertain parameter $q$ which appears in the scenario approach.
\end{remark}
\begin{remark}
The proof of this result has similarities and differences compared to other results which appeared in the probabilistic design literature, see the survey paper \cite{calafiore_research_2011}. Specifically, Theorem 1 in \cite{calafiore_probabilistic_2007} studies the success of a probabilistic oracle, but it does not consider the validation sample techniques. The general framework of sequential algorithms with probabilistic validation is studied in \cite{alamo_randomized_2015}, see Theorem 5 in particular. The contribution of the present paper is to exploit these methods for convex optimization problems in the context of the scenario approach.
\end{remark}
\begin{remark}[Optimal Value of $\alpha$]\label{rem:optimal alpha}{
The  sample bound (\ref{eq:sample bound Mk}) has some similarities with the one derived in \cite[Theorem 2]{calafiore_research_2011},  originally proven in \cite{dabbene_randomized_2010},
and also used in \cite{alamo_randomized_2015}. However, since we are using a finite sum\footnote{See in particular the summation  (\ref{eq:exitbad}) in the proof of Theorem \ref{theo:property of algorithm 1}.}, thanks to the finite scenario bound obtained solving (\ref{eq:binomial dis2}), we can use the finite hyperharmonic series
$\mathcal{S}_{\ic_t-1}(\alpha)=\sum_{j=1}^{\ic_t-1}j^{-\alpha}$
(also known as $p$-series)  instead of the Riemann Zeta function $\sum_{j=1}^{\infty}j^{-\alpha}$.
Indeed, the Riemann Zeta function does not converge when $\alpha$ is smaller than one, while in the presented bound (\ref{eq:sample bound Mk}), $\alpha$ may be smaller than one, which improves the overall sample complexity in particular for large values of $\ic_t$. The optimal value of $\alpha$ which minimizes the sample bound (\ref{eq:sample bound Mk}) has been computed using numerical simulations for different values of the termination parameter $\ic_t$. The ``almost'' optimal value of $\alpha$ minimizing (\ref{eq:sample bound Mk}) for a wide range of $\ic_t$ is $\alpha=0.1$. The bound (\ref{eq:sample bound Mk}) (for $\alpha=0.1)$ improves upon the bound (17) in \cite{calafiore_research_2011}, by $5\% \text{ to } 15\%$ depending on the termination parameter $\ic_t$. It also improves upon the bound in \cite{yasuaki_polynomial-time_2007}, which uses finite sum but in a less effective way.}
\end{remark}
Finally, we note that the dependence of $M_{k}$ upon the parameters $\varepsilon$ and $\delta$ is logarithmic in $1/\delta$
and substantially linear in $1/\varepsilon$. This is a key difference with an approach based on a straightforward (a posteriori) Monte Carlo analysis, which indeed requires $1/\epsilon^2$ validation samples, see e.g.\ \cite{tempo_randomized_2012}.

\subsection{Partial Constraint Satisfaction}\label{sec:partial constraint}
In the ``design" and ``validation" steps of Algorithm \ref{alg:sequential algorithm 1}, \emph{all} elements of the design and validation sample sets are required to satisfy the constraint in (\ref{eq:original problem}). However, it is sometimes impossible to find a solution satisfying the constraint in (\ref{eq:original problem}) for the entire set of uncertainty. For this reason, in Algorithm \ref{alg:sequential algorithm}, we consider the scenario design with discarded constraints where we allow a limited number of design and validation samples to violate the constraint in (\ref{eq:original problem}). We now provide a theorem stating the theoretical properties of Algorithm \ref{alg:sequential algorithm}.
\begin{theorem}\label{theo:property of algorithm}
Let Assumptions \ref{assump:convexity} and \ref{assump:unique} hold. Then with probability at least $1-\delta$ the solution obtained from Algorithm \ref{alg:sequential algorithm} satisfies the inequality $\ev(\theta\ped{\sol})\leq\varepsilon$.
\end{theorem}
\begin{proof}
See Appendix \ref{app:prof of thm2}.
\end{proof}
\begin{algorithm}
\caption{\sc{Sequential Randomized Algorithm: Partial Constraint Satisfaction}}
\label{alg:sequential algorithm}
\begin{enumerate}
  \item \textsc{Initialization} \newline
  Set the iteration counter to zero $(\ic=0)$. Choose  probabilistic levels $\varepsilon$, $\delta$,  number of iterations $\ic_t>1$,  number of discarded constraints $r$ and define the following parameters:
   \begin{align}\label{eq:beta v}
  \beta_v\doteq\max\left\{1,\beta_w\left(\ic_t \ln\frac{2\ic_t}{\delta}\right)^{-1}\right\}, \quad
  \beta_w\doteq\frac{1}{4\varepsilon}\ln\frac{1}{\delta}.
  \end{align}
  \item \textsc{Update}\label{item:update}\newline
  Set $\ic=\ic+1$, $N_\ic\ge N\frac{\ic}{\ic_t}$ and $N_{\ic,r}\ge\frac{(N-r)\ic}{\ic_t}$ where $N$ is the smallest integer satisfying
  \begin{equation}
{r+n_\theta-1 \choose r}\sum_{i=0}^{r+n_\theta-1} { N \choose i}
\varepsilon^i(1-\varepsilon)^{N-i}\leq\delta/2.
\end{equation}
  \item \textsc{Design}
  \begin{itemize}
    \item Draw $N_\ic$ i.i.d. samples $\ms_d=\{q_d^{(1)},\ldots, q_d^{(N_\ic)}\}\in\mathbb{Q}$ based on the underlying distribution.
    \item Solve the following \textit{reduced-size scenario problem}
        \begin{align}\label{eq:scenario des in alg}
        \widehat{\theta}_{N_{\ic,r}}= & \arg \underset{\theta\in\Theta}{\text{min}}\quad  c^T\theta\\ \nonumber
        & \text{subject to}\quad  f(\theta,q_d^{(i)})\leq0,\quad i=1,\ldots,N_{\ic,r}.\footnotemark
        \end{align}
    \item \textbf{If} the last iteration is reached $(\ic=\ic_t)$, set $\theta\ped{\sol}=\widehat{\theta}_{N_\ic,r}$ and \textbf{Exit}.
   \item \textbf{Else}, continue to the next step.
  \end{itemize}
  \item \textsc{Validation}
  \begin{itemize}
    \item Draw
    \begin{equation}
    \label{Mk2}
    M_\ic\ge 2\ic\beta_v\frac{1}{\varepsilon}\ln\frac{2\ic_t}{\delta}
    \end{equation}
    i.i.d. samples
    $\ms_v=\{q_v^{(1)},\ldots, q_v^{(M_\ic)}\}\in\mathbb{Q}$ based on the underlying distribution.
    \item \textbf{If}
        \begin{equation}\label{eq:feasibility condition}
        \frac{1}{M_\ic}\sum _{i=1}^{M_\ic}\mathbb{I}_f(\widehat{\theta}_{N_\ic,r},q_v^{(i)})\leq \left(1-(\ic\beta_v)^{-1/2}\right)\varepsilon
        \end{equation}
        set $\theta\ped{\sol}=\widehat{\theta}_{N_\ic,r}$ and \textbf{Exit}.
        \item \textbf{Else}, goto step (\ref{item:update}).
  \end{itemize}
\end{enumerate}
\end{algorithm}

Algorithm \ref{alg:sequential algorithm} is different from the algorithm presented in \cite{alamo_randomized_2009}, which was derived for non-convex problems, in a number of aspects. That is, the cardinality of the sequence of sample sets used for design and validation increases linearly with iteration counter $\ic$, while it increases exponentially in \cite{alamo_randomized_2009}. Furthermore, the cardinality of the validation sample set at the last iteration $M_{\ic_t}$ in \cite{alamo_randomized_2009} is chosen to be equal to the cardinality of the sample set used for design at the last iteration $N_{\ic_t}$ while, in the presented algorithm $M_{\ic_t}$ and hence $\beta_w$ are chosen based on the additive Chernoff bound which is less conservative.

We also note that both Algorithms \ref{alg:sequential algorithm 1} and \ref{alg:sequential algorithm} fall within the class of SPV algorithms in which the ``design" and ``validation" steps are independent, see  \cite{alamo_randomized_2015}. As a result, in principle we could use the same strategy as Algorithm \ref{alg:sequential algorithm 1} to tackle discarded constraints problems. Nevertheless, Algorithm~\ref{alg:sequential algorithm} appears to be more suitable for discarded constraints problems, since (\ref{eq:scenario des in alg}) forces the solution to violate some constraints.

\subsection{Algorithms Termination and Overall Complexity}

Note that the maximum number of iterations of both Algorithms 1 and 2 is chosen by the user by selecting
the termination parameter $\ic_t$. This choice affects directly the cardinality of the sample sets used for design  $N_\ic$ and validation $M_\ic$ at each iteration, although they converge to fixed values (independent of $\ic_t$) at the last iteration. In problems for which the original scenario sample complexity is large, we suggest to use larger $\ic_t$. In this way, the sequence of sample bounds $N_\ic$ starts from a smaller number and does not increase significantly with the iteration counter $\ic$.
\footnotetext{See footnote 1.}
We also remark that, in  Algorithm \ref{alg:sequential algorithm},  the right hand side of the inequality (\ref{eq:feasibility condition}) cannot be negative, which in turn requires~$\beta_v$ to be greater than one. This condition is taken into account in defining $\beta_v$ in (\ref{eq:beta v}). However, we can avoid generating $\beta_v<1$ by the appropriate choice of $\ic_t$. To this end, we solve the inequality $\beta_v\geq1$ for $\ic_t$ as follows:

\begin{align*}
\beta_v\doteq & \beta_w\left(\ic_t\ln\frac{2\ic_t}{\delta}\right)^{-1}\!\!\!\!\!\!\geq 1\Rightarrow
\ic_t\ln \frac{2\ic_t}{\delta}\leq\beta_w\Rightarrow
\frac{2\ic_t}{\delta}\ln\frac{2\ic_t}{\delta}\leq\frac{2\beta_w}{\delta}.
\end{align*}
\noindent
For implementation purposes, it is useful to use the function ``LambertW" also known as ``Omega function" or ``product logarithm"\footnote{This function is the inverse function of $f(W)=W e^W$. In other words, $W=\text{LambertW}[f(W)]$; see e.g. \cite{corless_lambertw_1996} for more details.}
$
\ic_t\leq \frac{\beta_w}{\text{LambertW}\left(\frac{2\beta_w}{\delta}\right)}.
$
\noindent

Furthermore, note that the overall complexity of Algorithm \ref{alg:sequential algorithm 1} and \ref{alg:sequential algorithm} is a random variable, because the number of iterations is random. Indeed, the number of iterations when the algorithm terminates ($N_\ic$ and $M_\ic$) is only known \emph{a posteriori}, while in the scenario approach we can establish \emph{a priori} sample bounds.
We remark that the computational cost of solving convex optimization problems does not increase linearly with the number of constraints. Hence, we conclude that, if the algorithms terminate with a smaller number of design samples than the original sample complexity of the scenario problem, the reduction in the number of design samples may significantly improve the overall computational cost. This was the case in all the extensive numerical simulations we have performed.

In the particular case when the constraints are linear matrix inequalities (LMIs), then the reduced-size scenario problem (\ref{eq:scenario des in alg1})  can be reformulated as a semidefinite program by combining $N_{k}$ LMIs into a single LMI with block-diagonal structure. It is known, see \cite{BenNem:01}, that the computational cost of this problem with respect to the number of diagonal blocks $N_{k}$ is of the order of $N_{k}^{3/2}$.
Similar discussions hold for Algorithm  \ref{alg:sequential algorithm}.
We conclude that a decrease in $N_{k}$ can significantly reduce the computational complexity.

Finally, note that the computational cost of  validation steps in both presented algorithms is not significant, since they just require \emph{analysis} of a candidate solution for a number of i.i.d. samples extracted from the uncertainty set. For instance, consider the case when $\mathcal{H}_\infty$ performance of an $n$-dimensional system is of concern. This is generally expressed in terms of an LMI arising from a Riccati inequality. In this case, the number of floating point operations required to solve this LMI  inequality is of the order of $n^6$. On the other hand, checking if a Riccati inequality is satisfied requires checking positive definiteness of a symmetric matrix, which is of complexity $n^3$, see
further discussions in \cite[page 1327]{PetTem:14}.

\section{Application to Hard Disk Drive Servo Design}\label{sec:hdd example}
In this section, we employ the developed algorithms to solve a non-trivial application. The problem under consideration is the design of a robust track following controller for a hard disk drive (HDD) servo system affected by parametric uncertainty. Servo system in HDD plays a crucial role in increasing the storage capacity by providing a more accurate positioning algorithm.
The goal in this application is to achieve the storage density of $10$ Tera bit per square inch $(10 Tb/in^2)$.
It requires the variance of the deviation of read/write head from the center of a data track to be less than $1.16$ nanometer. Such a high performance has to be achieved in a robust manner, that is, for all drives produced in a mass production line. On the other hand, some imperfections in the production line such as manufacturing tolerances and slightly different materials or environmental conditions lead to slightly different dynamics over a batch of products.

\begin{table*}[!t]
\begin{center}
\scalebox{0.7}{
\begin{tabular}{cccccccccccccccccc}
\toprule
$\varepsilon$ & $\delta$ & $\ic_t$ & \multicolumn{3}{c}{\emph{Design}} & \multicolumn{3}{c}{\emph{Validation}} & \multicolumn{3}{c}{\emph{Objective}} & \multicolumn{3}{c}{\emph{Iteration}} & \multicolumn{3}{c}{\emph{Computational}}\tabularnewline
  &  &   & \multicolumn{3}{c}{\emph{Samples}} & \multicolumn{3}{c}{\emph{Samples}} & \multicolumn{3}{c}{\emph{Value}} & \multicolumn{3}{c}{\emph{Number}} & \multicolumn{3}{c}{\emph{Time (sec)}}\tabularnewline
\midrule
\midrule
  &  &  &  \emph{Mean} & \emph{Standard} & \emph{Worst} & \emph{Mean} & \emph{Standard} & \emph{Worst} & \emph{Mean} & \emph{Standard} & \emph{Worst} & \emph{Mean} & \emph{Standard} & \emph{Worst} & \emph{Mean} & \emph{Standard} & \emph{Worst}\tabularnewline
  &  &  & & \emph{Deviation} & \emph{Case} &  & \emph{Deviation} & \emph{Case} &  & \emph{Deviation} & \emph{Case}&  & \emph{Deviation} & \emph{Case} &  & \emph{Deviation} & \emph{Case} \tabularnewline
\midrule
$0.2$ & $10^{-2}$ & $20$ &  $219.4$ & $93$ & $496$ & $37$ & $0$ & $37$ & $0.6106$ & $0.006$ & $0.6241$ & $3.54$ & $1.5$& $8$ & $271.8$ & $230.5$ & $1195$\tabularnewline
\midrule
$0.1$ & $10^{-4}$ & $20$ &  $561.3$ & $229.8$ & $1397$ & $121.9$ & $0.37$ & $123$ & $0.6178$ & $0.005$ & $0.6275$ & $4.42$ & $1.81$ & $11$  & $1019$ & $874$ & $6025$\tabularnewline
\midrule
$0.05$ & $10^{-6}$ & $30$ &  $1041$ & $387.8$ & $1747$ & $347.5$ & $0.96$ & $349$ & $0.6211$ & $0.04$ & $0.6281$ & $5.96$ & $2.21$ & $10$  & $2633$ & $1809$ & $6963$\tabularnewline
\bottomrule
\end{tabular}
}
\end{center}
\caption{Simulation Results Obtained Using Algorithm \ref{alg:sequential algorithm 1}}
\label{tab: simulation results}
\end{table*}

\begin{table*}[!t]
\begin{center}
\scalebox{0.7}{
\begin{tabular}{cccccccccccccccccc}
\toprule
$\varepsilon$ & $\delta$ & $\ic_t$ & \multicolumn{3}{c}{\emph{Design}} & \multicolumn{3}{c}{\emph{Validation}} & \multicolumn{3}{c}{\emph{Objective}} & \multicolumn{3}{c}{\emph{Iteration}} & \multicolumn{3}{c}{\emph{Computational}}\tabularnewline
  &  &   & \multicolumn{3}{c}{\emph{Samples}} & \multicolumn{3}{c}{\emph{Samples}} & \multicolumn{3}{c}{\emph{Value}} & \multicolumn{3}{c}{\emph{Number}} & \multicolumn{3}{c}{\emph{Time (sec)}}\tabularnewline
\midrule
\midrule
  &  &  &  \emph{Mean} & \emph{Standard} & \emph{Worst} & \emph{Mean} & \emph{Standard} & \emph{Worst} & \emph{Mean} & \emph{Standard} & \emph{Worst} & \emph{Mean} & \emph{Standard} & \emph{Worst} & \emph{Mean} & \emph{Standard} & \emph{Worst}\tabularnewline
  &  &  & & \emph{Deviation} & \emph{Case} &  & \emph{Deviation} & \emph{Case} &  & \emph{Deviation} & \emph{Case}&  & \emph{Deviation} & \emph{Case} &  & \emph{Deviation} & \emph{Case} \tabularnewline
\midrule
$0.2$ & $10^{-2}$ & $20$ & $141.3$ & $27.9$ & $186$ & $189.2$ & $37.4$ & $249$ & $0.6084$ & $0.005$ & $0.6217$ & $2.28$ & $0.45$ & $3$  & $109.36$ & $41.48$ & $179.53$\tabularnewline
\midrule
$0.1$ & $10^{-4}$ & $20$ &  $276.8$ & $49$ & $381$ & $562$ & $99.6$ & $774$ & $0.6125$ & $0.04$ & $0.6226$ & $2.18$ & $0.36$ & $3$  & $253.3$ & $90.8$ & $456.3$\tabularnewline
\midrule
$0.05$ & $10^{-6}$ & $30$ &  $443.9$ & $93.9$ & $699$ & $1820.2$ & $386.9$ & $2866$ & $0.6169$ & $0.04$ & $0.6253$ & $2.54$ & $0.53$ & $4$  & $600$ & $233.6$ & $1419$\tabularnewline
\bottomrule
\end{tabular}
}
\end{center}
\caption{Simulation Results Obtained Using Algorithm \ref{alg:sequential algorithm}}
\label{tab: simulation results2}
\end{table*}


A voice coil motor (VCM) actuator in a disk drive system can be modeled in the form
\begin{equation}\label{eq:VCM model}
P_{VCM}=\sum_{i=1}^3\frac{A_i}{s^2+2\zeta_i\omega_is+\omega_i^2}
\end{equation}
where $\zeta_i$, $\omega_i$ and $A_i$ are damping ratio, natural frequency and modal constant for each resonance mode, see \cite{chamanbaz_probabilistic_2012} for their nominal values. We assume each natural frequency, damping ratio and modal constant to vary by $5\%$, $5\%$ and $10\%$ from their nominal values respectively. Hence, there are nine uncertain parameters in the plant. 
The objective is to design a full order dynamic output feedback controller which minimizes the worst case $\mathcal{H}_\infty$ norm of the transfer function from disturbance to output. This problem can be  reformulated in terms of linear matrix inequalities. Uncertain parameters enter into the plant description in a non-affine fashion; therefore, classical robust techniques are unable to solve the problem without introducing conservatism.

The sequential algorithms of Section \ref{sec:sequential algorithm} are implemented in Matlab using the toolbox R-RoMulOC \cite{chamanbaz_r-romuloc:_2015}. In the simulations, we assumed the probability density function of all uncertain parameters to be uniform. The choice of uniform distribution is chosen due to its worst case nature \cite{bai_worst-case_1998}. The number of discarded constraints $r$ in Algorithm \ref{alg:sequential algorithm} is chosen to be zero. The resulting optimization problem is solved for different values of $\varepsilon$, $\delta$ and $\ic_t$. Furthermore, we run the simulation $100$ times for each pair. The mean, standard deviation and worst case values of the number of design samples, validation samples, objective value, the iteration number in which the algorithm exits and total computational time\footnote{All the simulations are carried on a work station with $2.83\,GHz$ Core
$2$ Quad CPU and $8\,GB$ RAM.} are tabulated in Table \ref{tab: simulation results} and Table \ref{tab: simulation results2}. We remark that ``design samples" and ``validation samples" in Table \ref{tab: simulation results} and Table \ref{tab: simulation results2} reflect the number of design and validation samples at the iteration when the algorithm exits.  Table \ref{tab: simulation results scenario} shows the scenario bound along with the  computational time required for solving the random convex problem for the same probabilistic levels as Tables \ref{tab: simulation results} and  \ref{tab: simulation results2}; we highlight that the number of design parameters in the problem at hand is $153$.  The average computational time of Tables \ref{tab: simulation results} and \ref{tab: simulation results2} is much smaller than Table \ref{tab: simulation results scenario} which further proves the effectiveness of the proposed sequential randomized algorithms. Nevertheless, there are very rare cases when the computational time of the proposed methodology is larger than that of scenario (last column of Table \ref{tab: simulation results}).
When the probabilistic levels become stringent (last row of Table \ref{tab: simulation results scenario}), we could not solve the scenario problem
while, using the two proposed Algorithms \ref{alg:sequential algorithm 1} and \ref{alg:sequential algorithm} the problem was solved efficiently.
\begin{table}[!t]
\begin{center}
\scalebox{0.9}{
\begin{tabular}{cccc}
\toprule
$\varepsilon$ & $\delta$ & \emph{The Scenario Bound} & \emph{Computational Time (Sec)}\tabularnewline
\midrule
\midrule
$0.2$ & $1\times10^{-2}$ & $1238$ & $538$\tabularnewline
\midrule
$0.1$ & $1\times10^{-4}$ & $2548$ & $1536$\tabularnewline
\midrule
$0.05$ & $1\times10^{-6}$ & $5240$ & $-$\tabularnewline
\bottomrule
\end{tabular}
}
\par\end{center}
\caption{The Scenario Bound and the Required Computational Time for the Same Probabilistic Levels as Tables. \ref{tab: simulation results} and \ref{tab: simulation results2}}
\label{tab: simulation results scenario}
\end{table}
\section{Conclusions}
We proposed two new sequential methods for solving in a computational efficient way uncertain convex optimization problems.
The main philosophy behind the proposed sequential randomized algorithms stems from the consideration that it is easy, from a  computational viewpoint, to validate a given ``candidate solution" for a large number of random samples. The  algorithms have been tested on a numerical example, and extensive numerical simulations show how the total computational effort is ``diluted''  by applying the proposed sequential methodology. Future theoretical work will concentrate on handling unfeasible problems.

\appendices
\section{Proof of the Theorem \ref{theo:property of algorithm 1}}\label{app:prof of thm1}
The proof follows similar reasoning to those in \cite{yasuaki_polynomial-time_2007}.
Notice that Algorithm \ref{alg:sequential algorithm 1}, as constructed, always returns a solution $\theta\ped{\sol}$. Our goal is to bound the probability of this solution being ``bad", that is we want to bound the probability of the event \\
$
\text{ExitBad}\doteq\{\text{Algorithm \ref{alg:sequential algorithm 1} returns a bad solution, i.e.\ $\ev(\theta\ped{sol})>\varepsilon$}\}.
$
To do this, we introduce the following events
\begin{align*}
&\text{Iter}_\ic\doteq\{\text{the validation step of the $\ic$th iteration is reached}\},\\
&\text{Feas}_\ic\doteq\{\widehat{\theta}_{N_\ic}\text{ is declared as feasible in the ``validation" step}\},\\
&\text{Bad}_k\doteq\{\ev(\widehat{\theta}_{N_\ic})>\varepsilon\},\\
&\text{ExitBad}_\ic\doteq\{\text{Algorithm \ref{alg:sequential algorithm 1} exits at iteration }\ic\cap\text{Bad}_\ic\}.
\end{align*}
The goal is to bound the probability of the event ``ExitBad". Since ExitBad$_i\,\cap$ ExitBad$_j=\emptyset$ for $i\neq j$, the probability of the event ``ExitBad" can be reformulated in terms of the event ``ExitBad$_\ic$" as
\begin{align} \nonumber
\Pr\{\text{ExitBad}\}&=\Pr\{\text{ExitBad}_1\cup\text{ExitBad}_2\cup\cdots\cup\text{ExitBad}_{\ic_t}\}\\ \label{eq:exitbad}
&=\Pr\{\text{ExitBad}_1\}+\Pr\{\text{ExitBad}_2\}+\cdots
+\Pr\{\text{ExitBad}_{\ic_t-1}\}+\Pr\{\text{ExitBad}_{\ic_t}\}.
\end{align}
From the definition of the event ``ExitBad$_\ic$" and by considering that to exit at iteration $\ic\leq\ic_{t-1}$, the algorithm needs i) to reach $\ic$th iteration and ii) to declare $\widehat{\theta}_{N_\ic}$ feasible in the validation step, for $\ic=1,\ldots,\ic_{t-1}$, we have
\begin{align*}
&\Pr\{\text{ExitBad}_{\ic}\}=\Pr\{\text{Feas$_\ic\,\cap$ Bad$_\ic\,\cap$ Iter$_\ic$}\}\\
&=\Pr\{\text{Feas$_\ic\,\cap$ Bad$_\ic\,|$ Iter$_\ic$}\}\Pr\{\text{Iter$_\ic$}\}\leq
\Pr\{\text{Feas$_\ic\,\cap$ Bad$_\ic\,|$ Iter$_\ic$}\}\\
&=\Pr\{\text{Feas$_\ic | $ Bad$_\ic\cap$ Iter$_\ic$}\}\Pr\{\text{Bad$_\ic |$ Iter$_\ic$}\}
\leq\Pr\{\text{Feas$_\ic | $ Bad$_\ic\cap$ Iter$_\ic$}\}.
\end{align*}
Using the result of Theorem 1 in \cite{calafiore_probabilistic_2007}, we can bound the right hand side of the last inequality
\begin{equation}\label{eq:temp1}
\Pr\{\text{Feas$_\ic\,| $ Bad$_\ic\,\cap$ Iter$_\ic$}\}<(1-\varepsilon)^{M_\ic}.
\end{equation}
Combining (\ref{eq:exitbad}) and (\ref{eq:temp1}) results in
\begin{align}\nonumber
\Pr\{&\text{ExitBad}\}<(1-\varepsilon)^{M_1}+(1-\varepsilon)^{M_2}+\cdots+
(1-\varepsilon)^{M_{\ic_t-1}}\\
&+\Pr\{\text{ExitBad}_{\ic_t}\}=\sum_{\ic=1}^{\ic_t-1} (1-\varepsilon)^{M_\ic}+\Pr\{\text{ExitBad}_{\ic_t}\}.\label{eq:alg1 sum}
\end{align}
The summation in (\ref{eq:alg1 sum}) can be made arbitrary small by an appropriate choice of $M_\ic$. In particular, by choosing
\begin{equation}\label{eq:p series}
(1-\varepsilon)^{M_\ic}=\frac{1}{\ic^\alpha}\frac{1}{\mathcal{S}_{\ic_t-1}(\alpha)}\frac{\delta}{2},
\end{equation}
we have
\begin{align}
\label{sumMk}
\sum_{\ic=1}^{\ic_t-1} (1-\varepsilon)^{M_\ic}&=\sum_{\ic=1}^{\ic_t-1}\frac{1}{\ic^\alpha}\frac{1}{\mathcal{S}_{\ic_t-1}(\alpha)}\frac{\delta}{2}
=\frac{1}{\mathcal{S}_{\ic_t-1}(\alpha)}\frac{\delta}{2}\sum_{\ic=1}^{\ic_t-1} \frac{1}{\ic^\alpha}
=\frac{\delta}{2}.
\end{align}
Note that the choice of the number of design samples in the last iteration guarantees that $\Pr\{\text{ExitBad}_{\ic_t}\}\le \delta/2$. The statement follows, combining  (\ref{eq:alg1 sum}) with (\ref{sumMk}) and noting that
the bound (\ref{eq:sample bound Mk}) is obtained solving (\ref{eq:p series}) for $M_\ic$.

\section{Proof of the Theorem \ref{theo:property of algorithm}}\label{app:prof of thm2}
To prove the statement, define the events $\text{Iter}_\ic, \text{Feas}_\ic, \text{Bad}_k, \text{ExitBad}_\ic,\, \text{and }
\text{ExitBad}$ as in the proof of Theorem \ref{theo:property of algorithm 1}.
Then, note that the event $\text{Feas}_\ic$ can be written as
\[
\text{Feas}_\ic = \left\{\widehat{V}(\widehat{\theta}_{N_{k,r}},\ms_v)\leq\left(1-(\ic\beta_v)^{-1/2}\right)\varepsilon\right\},
\]
that is, $\widehat{\theta}_{N_{\ic}}$ is declared feasible whenever the feasibility test (\ref{eq:feasibility condition})
is passed.
Again, the goal is to bound the probability of the event ``ExitBad", which can be written as the summation of the events ``ExitBad$_\ic$" as in (\ref{eq:exitbad}).
In turn, for $k\leq k_{t-1}$, we can write
\[
\Pr\{\text{ExitBad}_{\ic}\}=\Pr\{\text{Feas$_\ic\,\cap$ Bad$_\ic\,\cap$ Iter$_\ic$}\}
\leq
\Pr\{\text{Feas$_\ic\,\cap$ Bad$_\ic$}\}
\doteq\Pr\{\text{MisClass}_{\ic}\},
\]
where we denoted  $\text{MisClass}_{\ic}$  the event of  misclassification at iteration $\ic$.
\begin{align*}
\text{MisClass}_{\ic} =&
\left\{
\widehat{V}(\widehat{\theta}_{N_{k,r}},\ms_v)\leq\left(1-(\ic\beta_v)^{-1/2}\right)\varepsilon
\right\}
\cap
\left\{
\ev(\widehat{\theta}_{N_{k,r}})>\varepsilon
\right\},
\quad k=1,\ldots,k_{t-1}.\end{align*}
By defining $\rho_\ic\doteq\left(1-(\ic\beta_v)^{-1/2}\right)\varepsilon$ and $\varepsilon_\ic\doteq(\ic\beta_v)^{-1/2}\varepsilon$, this event can be rewritten as
\begin{align*}
\text{MisClass}_{\ic}& \subseteq
\left\{
\widehat{V}(\widehat{\theta}_{N_{k,r}},\ms_v)\leq \rho_\ic
\right\}
\cap
\left\{
\ev(\widehat{\theta}_{N_{k,r}})-\widehat{V}(\widehat{\theta}_{N_{k,r}},\ms_v)>\varepsilon_\ic
\right\},
\quad k=1,\ldots,k_{t-1}.
\end{align*}
Applying the results of \cite[Theorem 1]{alamo_randomized_2009}, we can bound this event as follows
\begin{align}\label{eq:relative inequality}
\Pr\left\{
\text{MisClass}_{\ic}
\right\}
\leq &\Pr\bigg\{\frac{\ev(\widehat{\theta}_{N_{k,r}})-\widehat{V}(\widehat{\theta}_{N_{k,r}},\ms_v)}
{\sqrt{\ev(\widehat{\theta}_{N_{k,r}})}}>
\frac{\varepsilon_\ic}{\sqrt{\varepsilon_\ic+\rho_\ic}}\bigg\},\quad k=1,\ldots,k_{t-1}.
\end{align}
For any $\eta\in(0,1)$, the one-sided multiplicative Chernoff
inequality \cite{tempo_randomized_2012} guarantees that
\begin{equation}\label{eq:multiplicative chernoff inequality}
\Pr\{\ev(\widehat{\theta}_{N_{k,r}})-\widehat{V}(\widehat{\theta}_{N_{k,r}},\ms_{v})\geq\eta\ev(\widehat{\theta}_{N_{k,r}})\}\leq
e^{\frac{-\ev(\widehat{\theta}_{N_{k,r}})M_{\ic}\eta^2}{2}}.
\end{equation}
\noindent
Setting $\eta=\frac{\varepsilon_\ic}{\sqrt{\varepsilon_\ic+\rho_\ic}}
\frac{1}{\sqrt{\ev(\widehat{\theta}_{N_{\ic,r}})}}$ in (\ref{eq:multiplicative chernoff inequality}), combining with inequality
 (\ref{eq:relative inequality}), we obtain, for $k=1,\ldots,k_{t-1}$:
$
\Pr\left\{
\text{MisClass}_{\ic}
\right\}
\leq e^{\frac{-\varepsilon_\ic^2 M_\ic}{2(\varepsilon_\ic + \rho_\ic)}}
\le \frac{\delta}{2\ic_t},
$
where the last inequality follows from the choice of $M_{k}$ in (\ref{Mk2}).
Notice also that the choice of the number of design samples at the last iteration $N_{k_t}$ guarantees that  the probability of misclassification at the last iteration $(k=k_t)$ is at most $\delta/2$. Therefore, we can write
\[\Pr\{\text{ExitBad}\}\le \sum_{\ic=1}^{\ic_t}
\Pr\{\text{MisClass}_{\ic}\}
\leq\sum_{\ic=1}^{\ic_t-1} \frac{\delta}{2\ic_t}+\Pr\{\text{MisClass}_{\ic_{t}}\} =\frac{\delta(k_t-1)}{2k_t}+\frac{\delta}{2}\leq\delta,
\]
which proves the statement.

\bibliographystyle{plain}
\bibliography{ref,/Users/fabriziodabbene/Dropbox/Biblio/Frugi-biblio}

\end{document}